\tikzstyle{vertex}=[circle, draw, inner sep=0pt, minimum size=4.5pt]
\newcommand{\NP}{\ensuremath{\sf{NP}}\xspace}
\newcommand{\fpt}{\ensuremath{\sf{FPT}}\xspace}
\newcommand{\lcp}{\textsc{$2$-Load Coloring}}
\title{On Structural Parameterizations of  Load Coloring}
\author{
	\inst{}  
	I. Vinod Reddy \inst{}
}
\institute{Department of Electrical Engineerng and Computer Science\\
	Indian Institute of Technology Bhilai, Raipur, India \\
	\email{vinod@iitbhilai.ac.in},
} 
\begin{document}
	\pagestyle{plain}
	
	\maketitle
 \begin{abstract}
  Given a graph $G$ and a positive integer $k$, the \lcp{} problem is to check whether there is a $2$-coloring $f:V(G) \rightarrow \{r,b\}$ of $G$ such that for every $i \in \{r,b\}$, there are at least $k$ edges with both end vertices colored $i$. It is known that the problem is $\NP$-complete even on special classes of graphs like regular graphs. Gutin and Jones (Inf Process Lett 114:446-449, 2014) showed that the problem is fixed-parameter tractable by giving a kernel with at most $7k$ vertices. Barbero et al. (Algorithmica 79:211-229, 2017) obtained a kernel with less than $4k$ vertices and $O(k)$ edges, improving the earlier result.
  
  In this paper, we study the parameterized complexity of the problem with respect to structural graph parameters. We show that \lcp{} cannot be solved in time $f(w)n^{o(w)}$, unless ETH fails and it can be solved in time $n^{O(w)}$, where $n$ is the size of the input graph, $w$ is the clique-width of the graph and $f$ is an arbitrary function of $w$.  Next, we consider the parameters distance to cluster graphs, distance to co-cluster graphs and distance to threshold graphs, which are weaker than the parameter clique-width and show that the problem is fixed-parameter tractable (FPT) with respect to these parameters. Finally, we show that \lcp{} is $\NP$-complete even on bipartite graphs and split graphs.
\end{abstract}


\section{Introduction}\label{S:intro}
Given a graph $G$ and a positive integer $c$, the load distribution of a $c$-coloring $f:V(G) \rightarrow [c]$ is a tuple $(f_1,\ldots,f_c)$, where $f_i$ is the number of edges with at least one end point colored with $i$. 
The $c$-\textsc {Load Coloring} problem is to find a coloring $f$ such that the function $\ell_f(G)= \max\{f_i : i \in [c]\}$ is minimum. We denote this minimum by $\ell(G)$. Ahuja et al~\cite{ahuja2007minimum} showed that the problem is $\NP$-hard on general graphs when $c=2$. They also gave a polynomial time algorithm for \lcp{} on trees.  

In a $2$-coloring $f: V(G) \rightarrow \{r,b\}$, an edge is called red (resp. blue) if both end vertices are colored with $r$ (resp. $b$). We use $r_f$ and $b_f$ to denote the number of red and blue edges in a $2$-coloring $f$ of $G$. 
Let $\mu_f(G)=\min\{r_f, b_f\}$ and $\mu(G)$ is the maximum of $\mu_f(G)$ over all possible $2$-colorings of $G$. Ahuja et al.~\cite{ahuja2007minimum} showed that the \lcp{} problem is equivalent to maximizing $\mu(G)$ over all possible $2$-colorings of $G$, in particular they showed that $\ell(G)=|E(G)|-\mu(G)$.

\begin{framed}

	\noindent \textsc{} \\
	\textbf{Input:} A graph $G=(V,E)$ and an integer $k$ \\
	\hspace{-0.1cm} \textbf{Question:}  Does there exists a coloring $f: V(G) \rightarrow \{r, b\}$ such that $\mu(G) \geq k$? (i.e, $r_f \geq k$ and $b_f \geq k$)

\end{framed}

The above version of the load coloring problem has been studied from the parameterized complexity perspective. Gutin et al.~\cite{gutin2014parameterized} proved that the problem admits a polynomial kernel (with at most $7k$ vertices) parameterized by $k$. They also showed that the problem is fixed-parameter tractable when parameterized by the tree-width of the input graph. More recently, Barbero et al.~\cite{barbero2017parameterized}  obtained a kernel for the problem with at most $4k$ vertices improving the result of~\cite{gutin2014parameterized}. 

In this paper, we study the following variant of the load coloring problem.

\begin{framed}

	\noindent \textsc{\lcp{}} \\
	\textbf{Input:} A graph $G=(V,E)$ and integers $k_1$ and $k_2$ \\
	\textbf{Question:}  Does there exists a coloring $f: V(G) \rightarrow \{r, b\}$ such that $r_f \geq k_1$ and $b_f\geq k_2$?

\end{framed}

\paragraph{Our contributions.}
In this paper, we study the \lcp{} problem from the viewpoint of parameterized complexity. A parameterized problem with input size $n$ and parameter $k$ is called fixed-parameter tractable ($\fpt$) if it can be solved in time $f(k)n^{O(1)}$, where $f$ is a function only depending on the parameter $k$ (for more details on parameterized complexity refer to the text~\cite{CyganFKLMPPS15}).
There are many possible parameterizations for \lcp{}. One such parameter is the size of the solution. The problem admits a linear kernel~\cite{gutin2014parameterized} with respect to the size of the solution. 
In this paper, we study the \lcp{} problem with respect to various structural graph parameters. These parameters measure the complexity of the input rather than the problem itself. Tree-width is one of the well-known structural graph parameters. The \lcp{} problem is $\fpt$ when parameterized by tree-width~\cite{gutin2014parameterized} of the input graph. 

Even though tree-width is a widely used graph parameter for sparse graphs, it is not suitable for dense graphs, even if they have a simple structure. In Section~\ref{sec-cw}, we consider the graph parameter clique-width introduced by Courcelle and Olariu~\cite{courcelle2000upper}, which is a generalization of the parameter tree-width. We show that \lcp{} can be solved in time $n^{O(w)}$, and cannot be solved in $f(w) n^{o(w)}$ unless ETH fails, where $w$ is the clique-width of the $n$ vertex input graph and $f$ is an arbitrary function of $w$.  Next, we consider the parameters  distance to cluster graphs, distance to co-cluster graphs, and distance to threshold graphs. These parameters are weaker than the parameter clique-width in the sense that they are a subclass of bounded clique-width graphs. Thus studying the parameterized complexity of \lcp{} with respect to these parameters reduces the gap between tractable and intractable parameterizations. In Section~\ref{sec-palgo}, we show that \lcp{} is fixed-parameter tractable with respect to the parameters distance to cluster, distance to co-cluster  and distance to threshold graphs. Finally in Section~\ref{sec-bipartite}, we show that 
\lcp{} is $\NP$-complete on bipartite graphs and split graphs. Table~\ref{table:results} gives an overview of our results.



%

\begin{table}[t]
	\centering\footnotesize
	\caption{: Known and new parameterized results for \lcp{}}
	\begin{tabular}{ p{.03\textwidth} p{.40\textwidth} p{.50\textwidth}} 
		\toprule
		
	& \textbf{Parameter} & \textbf{Results} \\%

		\midrule
	
	&	size of the solution  &
		Linear kernel~\cite{barbero2017parameterized,gutin2014parameterized} \\ \hline
	&	tree-width  & FPT~\cite{gutin2014parameterized} \\ \hline
		
	&	clique-width($w$)  & $n^{O(w)}$ algorithm  (\autoref{thm-cw-xp}) \\
	& & no $f(w)n^{o(w)}$ algorithm (\autoref{thm-cw-lower})\\ \hline
	
	&	distance to cluster graphs  & FPT (\autoref{thm-dist-cluster-1}) \\ \hline
	&	distance to co-cluster graphs & FPT (\autoref{thm-dist-cocluster-1}) \\ \hline
	&	distance to threshold graphs  & FPT (\autoref{thm-dist-threshold}) \\ \hline
	&	bipartite graphs  & para-NP-hard (\autoref{thm-bipartite}) \\ \hline
	&	split graphs  & para-NP-hard (\autoref{thm-split}) \\ \hline

	\end{tabular}
	\label{table:results}
\end{table}

\section{Preliminaries}
In this section, we introduce some basic notation and terminology related to graph theory and parameterized complexity. 
For $k \in \mathbb{N}$, we use $[k]$ to denote the set $\{1,2,\ldots,k\}$. 
If $f: A \rightarrow B$ is a function and $C \subseteq A$, $f|_C$ denotes the restriction of $f$ to $C$, that is $f|_C: C \rightarrow B$ such that for all $x \in C$, $f|_C(x)=f(x)$
All graphs we consider in this paper are undirected, connected, finite and simple. For a  graph $G=(V,E)$, by $V(G)$ and $E(G)$ we denote the vertex set and edge set of $G$ respectively. We use $n$ to denote the number of vertices and $m$ to denote the number of edges of a graph.
An edge between vertices $x$ and $y$ is denoted as $xy$ for simplicity. 
For a  subset $X \subseteq V(G)$, the graph $G[X]$ denotes the subgraph of $G$ induced by vertices of $X$ and $E_G[X]$ denote the set of edges having both end vertices in the set $X$. For subsets $X, Y \subseteq V(G)$, $E_G[X,Y]$ denote the set of edges connecting $X$ and $Y$.

For a vertex set $X \subseteq V(G)$, we denote $G - X$, the graph obtained from $G$ by deleting all vertices of $X$ and their incident edges. 
For a vertex $v\in V(G)$,
by $N(v)$, we denote the set $\{u \in V(G)~|~ vu \in E(G)\}$ and  we use $N[v]$ to denote the set $N(v) \cup \{v\}$. The neighbourhood of a vertex set $S \subseteq V(G)$ is $N(S)=(\cup_{v \in V(G)} N(v)) \setminus S$.
A vertex is called \emph{universal vertex} if it is adjacent to every other vertex of the graph.  For more details on standard graph-theoretic notation and terminology, we refer the reader  to the text \cite{diestel2005graph}.

%

\subsection{Graph classes} We now define the graph classes which are considered in this paper. 
A graph is \emph{bipartite} if its vertex set can be partitioned into two disjoint sets such that no two vertices in the same set are adjacent.
A \emph{cluster} graph is a disjoint union of complete graphs.  A \emph{co-cluster} graph is the complement graph of a cluster graph.
A graph is a \emph {split graph} if its vertices can be partitioned into a clique and an independent set.  Split graphs are $(C_4,C_5,2K_2)$-free.
A graph is a \emph{threshold graph} if it can be constructed from the one-vertex graph by repeatedly adding either an isolated vertex or a universal vertex. 
The class of threshold graphs is the intersection of split graphs and cographs~\cite{mahadev1995threshold}. Threshold graphs are $(P_4, C_4, 2K_2)$-free.
We denote a split graph (resp. threshold graph) with $G = (C,I)$ where $C$ and $I$ denotes the partition of $G$ into a clique and an independent set.

 For a graph class $\mathcal{F}$  the distance to $\mathcal{F}$ of a graph  $G$ is the minimum number of vertices that have to be deleted from $G$ in order to obtain a graph in $\mathcal{F}$. The parameters distance to cluster graphs~\cite{guo2009more}, distance to co-cluster graphs, distance to threshold graphs~\cite{cai1996fixed} can be computed in \fpt time.

\subsection{Clique-width}
The \emph{clique-width} of a graph $G$  denoted by $cw(G)$, is defined as the minimum number of labels needed to construct $G$ 
using the following four operations:
\begin{enumerate}
	\setlength{\itemsep}{1pt}
	\setlength{\parskip}{0pt}
	\item [i.] \emph{Introducing a vertex.} $\Phi=v(i)$, creates a new vertex $v$ with label $i$. $G_{\Phi}$ is a graph consisting a single vertex $v$ with label $i$. 
	\item [ii.] \emph{Disjoint union.} $\Phi=\Phi' \oplus \Phi''$,  $G_{\Phi}$ is a disjoint union of labeled graphs $G_{\Phi'}$ and $G_{\Phi''}$
	\item [iii.]\emph{Introducing edges.} $\Phi=\eta_{i,j}(\Phi')$, connects each vertex with label $i$ to each vertex with label $j$ ($i \neq j$) in $G_{\Phi'}$.
	\item [iv.]  \emph{Renaming labels.} {\bf $\Phi= \rho_{i\rightarrow j}(\Phi')$}: each vertex of label $i$ is changed to label $j$ in $G_{\Phi'}$.
\end{enumerate}
An  expression build from the above four operations using $w$ labels is called as $w$-$expression$. 
In otherwords, the {\it clique-width} of a graph $G$, is the minimum $w$ for which there exists a $w$-expression that defines the graph $G$. 
A $w$-expression $\Psi$ is a \emph{nice} $w$-expression of $G$, if no edge is introduced twice in $\Psi$.

\section{Graphs of Bounded Clique-width}\label{sec-cw}

\subsection{Upper bound}
In this section, we present an algorithm for solving \lcp{} which runs in time $n^{O(w)}$ on graphs of clique-width at most $w$.
\begin{theorem}\label{thm-cw-xp}
	\lcp{} can be solved in time $n^{O(w)}$, where $w$ is the clique-width of the input graph.
\end{theorem}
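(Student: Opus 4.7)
The plan is to perform dynamic programming along a nice $w$-expression tree $\Psi$ of $G$, summarizing each subexpression $\Psi_t$ by a vector that records, for every label, how its vertices are split between the two colors and, separately, how many monochromatic edges have accumulated. Formally, at each node $t$ let $G_t$ be the labeled graph constructed by $\Psi_t$; the DP state is a tuple $(r_1,b_1,\ldots,r_w,b_w,R,B)$ where $r_i$ (resp.\ $b_i$) is the number of vertices of $G_t$ whose current label is $i$ and whose color under a partial $2$-coloring is red (resp.\ blue), while $R$ (resp.\ $B$) is the number of red (resp.\ blue) edges already present in $G_t$. I maintain a Boolean table $T_t$ that marks which tuples are realized by some $2$-coloring of $G_t$.

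The transitions follow the four construction operations. When introducing a vertex $v(i)$, mark as realizable the two tuples with exactly one of $r_i,b_i$ equal to $1$ and all other entries zero. At a disjoint union node $\Phi=\Phi'\oplus\Phi''$, combine the children by coordinate-wise addition of the label counts and of $R,B$. At an edge-introduction node $\eta_{i,j}$, leave the label counts unchanged and update the edge counters by adding $r_i\cdot r_j$ to $R$ and $b_i\cdot b_j$ to $B$, using the crucial fact that niceness of $\Psi$ ensures these edges have not been counted previously. At a relabeling $\rho_{i\to j}$, merge coordinates by replacing $(r_j,b_j)$ with $(r_j+r_i,\,b_j+b_i)$ and zeroing $(r_i,b_i)$, leaving $R,B$ unchanged. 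At the root, I answer \textsc{Yes} iff some realizable tuple satisfies $R\ge k_1$ and $B\ge k_2$.

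Soundness rests on the observation that in a $w$-expression any two vertices sharing the same current label have identical future neighborhoods, so the only information the rest of the expression can use about $G_t$ is the per-label pair $(r_i,b_i)$ together with the already accumulated $R$ and $B$: exactly what the state records. For the running time, the state space at each node is bounded by $(n+1)^{2w}\cdot(n^{2}+1)^{2}=n^{O(w)}$, and every transition can be carried out in $n^{O(w)}$ time, with the disjoint union (a convolution over two such tables) being the bottleneck. Since a $w$-expression, assumed to be given with $G$ or computed by a standard subroutine, has polynomially many nodes, the total running time is $n^{O(w)}$.

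The main subtlety I expect is the bookkeeping at the $\eta_{i,j}$ step, where correctness truly depends on the niceness assumption; without it, we would need to remember per label pair whether an $\eta_{i,j}$ operation had already been performed, which would inflate the state beyond $n^{O(w)}$. Verifying that niceness can be assumed without loss of generality (so that the input $w$-expression feeds the DP cleanly) is therefore the only non-mechanical part of the argument.
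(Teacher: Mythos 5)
Your proposal is correct and follows essentially the same approach as the paper: a bottom-up dynamic program over a nice $w$-expression whose state records, for each label, the number of red and blue vertices, together with the accumulated monochromatic edge counts, with the same transitions at the four operations. The only cosmetic difference is that you keep a Boolean feasibility table over exact pairs $(R,B)$ while the paper maximizes the number of blue edges subject to a lower bound on red edges; both yield $n^{O(w)}$ states and the same running time.
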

\begin{proof}
	The algorithm is based on  a dynamic programming over the $w$-expression of the input graph $G$. We assume that the $w$-expression $\Psi$ defining $G$ is nice, that is every edge is introduced exactly once in $\Psi$. 
	
	For each subexpression $\Phi$ of $\Psi$, 
	
	$$OPT(\Phi, n_{1,r}, n_{1,b},n_{2,r}, n_{2,b},\ldots, n_{w,r}, n_{w,b},k_1)$$
	
	denotes maximum number of blue edges that can be obtained in a $2$-coloring $f:V(G_{\Phi}) \rightarrow \{r,b\}$ of $G_{\Phi}$ with the constraint that number of red edges is at least $k_1$ in $G_{\Phi}$ and the number of vertices of label $i$ in $G_{\Phi}$ that are colored with a color $\ell$ in $G_{\Phi}$ is $n_{i,\ell}$, where $\ell \in \{r,b\}, i \in [w]$. 
	
	If there are no colorings satisfying the constraint  $OPT(\Phi, n_{1,r}, n_{1,b},\ldots, n_{w,r}, n_{w,b},k_1)$, then we set its value equal to -$\infty$.
	Observe that $G$ is a \textsc{Yes}-instance of \lcp{} if and only if $OPT(\Psi,., \ldots,., k_1) \geq k_2$, for some $2$-coloring of $G_{\Psi}$.

	Now we give the details of calculating the values of $OPT(\Phi, )$ at each operation.
	\begin{enumerate}
		\item $\Phi=v(i)$. In this case $G_{\Phi}$ contains one vertex of label $i$ and no edges. Hence $OPT(\Phi, 0, 0,\ldots n_{i,r}=1,0, \ldots 0,k_1=0)=0$ and $OPT(\Phi, 0, 0,\ldots 0, n_{i,b}=1,0, \ldots 0,0,k_1=0)=0$. Otherwise $OPT(\Phi,\ldots, k_1)=- \infty$.
		
		\item  $\Phi=\rho_{i \rightarrow j} (\Phi')$.
		
		All vertices of label $i$ are relabled to $j$ by $\rho_{i \rightarrow j}$ operation in $G_{\Phi}$, hence, there are no vertices of label $i$ in
		$G_\Phi$, so $n_{i,r}=n_{i,b}=0$. 	Let $c$ be a coloring corresponding to an entry $OPT(\Phi, n_{1,r}, n_{1,b},\ldots, n_{w,r}, n_{w,b},k_1)$.
		Then $c$ is also a coloring of $G_{\Phi'}$, corresponding to the entry $OPT(\Phi', n'_{1,r}, n'_{1,b},\ldots, n'_{w,r}, n'_{w,b},k_1)$, where $n'_{i,\ell}+n'_{j,\ell}=n_{j,\ell}$ for  $\ell \in \{r,b\}$ and $n'_{p,\ell}=n_{p,\ell}$ for all $p \in [w] - \{i,j\}$ and  $\ell \in \{r,b\}$. 
		The number of red and blue edges in $G_{\Phi}$ is the same as that in $G_{\Phi'}$ with respect to the coloring $c$ .
		Hence, we have the following relation.

			\begin{multline*}\hspace{-0.1cm} OPT(\Phi, n_{1,r}, n_{1,b},\ldots, n_{w,r}, n_{w,b},k_1)=\\\hspace{-0.1cm} \left\{\begin{array}{l}
			\max
			\left\{
			OPT(\Phi', n'_{1,r},\ldots,n'_{w,b},k_1) 
			\left|
			\begin{matrix}
			n'_{i,\ell}+n'_{j,\ell}=n_{j,\ell}, n_{i,\ell}=0 \text{~for each~} \ell \in \{r,b\} \\
			\text{and~} n_{p,\ell}=n'_{p,\ell} \text{~for all~} p \in [w]-\{i,j\} \\\hspace{-3.6cm} \text{and~} \ell\in \{r,b\}\\
			\end{matrix}
			\right.
			\right\} \\
				-\infty, \text{ otherwise}
			\end{array}
			\right.\end{multline*}

		\item $\Phi= \Phi' \oplus \Phi''$. As this operation does not add new edges,  any coloring $c$ corresponding to $OPT(\Phi, n_{1,r}, n_{1,b},\ldots, n_{w,r}, n_{w,b},k_1)$ is split between two colorings $c'=c|_{V(G_{\Phi'})}$ and  $c''=c|_{V(G_{\Phi''})}$ respectively. 
		As $G_{\Phi}$ is the disjoint union of $G_{\Phi'}$  and $G_{\Phi''}$, the number of red and blue edges edges in $G_{\Phi}$ with respect
		to $c$ is a sum of number red and blue edges with respect to $c'$  and $c''$ in the graphs $G_{\Phi'}$  and $G_{\Phi''}$. 

		\begin{multline*}\hspace{-0.2cm} OPT(\Phi, n_{1,r}, n_{1,b},\ldots, n_{w,r}, n_{w,b},k_1)=\\\hspace{-0.2cm} \max\limits_{\begin{matrix} n'_{i,a}+n''_{i,a}=n_{i,a}\\ k'_1+k''_1 =k_1 \end{matrix}}\hspace{-0.2cm} \left \{OPT(\Phi', n'_{1,r},\ldots, n'_{w,b},k'_1) + OPT(\Phi'', n''_{1,r},\ldots, n''_{w,b},k''_1)\right\}\end{multline*}
		
		\item $\Phi = \eta_{i,j} (\Phi')$.		The graph $G_{\Phi}$ is obtained from  $G_{\Phi'}$ by adding the edges between each vertex of label $i$ to each vertex of label $j$.  Any coloring $c$ of $G_{\Phi}$ is also a coloring of $G_{\Phi'}$.  As given $w$-expression is nice, every edge is which is added by this operation was not present in $G_{\Phi'}$. Therefore $\eta_{i,j}$ operation on $G_{\Phi'}$ creates  $n_{i,r}\cdot n_{j,r}$ many red edges and $n_{i,b}\cdot n_{j,b}$ blue edges. Hence, we have the following relation

		$$\hspace{-0.5cm} OPT(\Phi, n_{1,r}, \ldots, n_{w,b},k_1)=OPT(\Phi', n_{1,r}, \ldots, n_{w,b},k_1-n_{i,r}\cdot n_{j,r})+n_{i,b}\cdot n_{j,b}$$
		
	\end{enumerate}
	We have described the recursive formulas for all possible cases. The correctness of the algorithm follows from the description of the procedure.
	The number of entries in the OPT table is at most $|\Psi| n^{O(w)}$. 
	We can compute each entry of the OPT table in $n^{O(w)}$ time. 
	The maximum number of blue edges that can be obtained in $G_{\Psi}$ is	equals to $\max_{n_{1,r},\ldots,n_{w,b}}OPT(\Psi, n_{1,r}, n_{1,b},\ldots, n_{w,r}, n_{w,b},k_1)$ which can be computed in $n^{O(w)}$ time. This proves that \lcp{} can be solved in time $n^{O(w)}$ on graphs of clique-width at most $w$. \qed
\end{proof}


\subsection{Lower Bound}
We now show the lower bound complementing with the corresponding upper bound result of the previous section. To prove our result we give a linear $\fpt$ reduction from the \textsc{Minimum Bisection} problem.
In the  \textsc{Minimum Bisection} problem, we are given a graph $G$ with an even number
of vertices and a positive integer $k$, and the goal is to determine whether there is a partition of $V(G)$
into two sets $V_1$ and $V_2$ of equal size such that $|E_G[V_1 , V_2]|\leq k$. Fomin et al.~\cite{fomin2010algorithmic} showed that  \textsc{Minimum Bisection} cannot be solved in time $f(w)n^{o(w)}$ unless ETH fails.

\begin{theorem}\label{thm-cw-lower}
	The \lcp{} problem cannot be solved in time $f(w)n^{o(w)}$ unless ETH fails. Here, $w$ is the clique-width of $n$ vertex input graph.
\end{theorem}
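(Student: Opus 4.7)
The plan is to give a linear FPT reduction from \textsc{Minimum Bisection} to \lcp{} that increases clique-width by only an additive constant and blows up the vertex count polynomially; combined with the lower bound of Fomin et al., this will yield the theorem.

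Construction. Given an instance $(G, k)$ of \textsc{Minimum Bisection} with $|V(G)| = n$ even and $|E(G)| = m$, I would build $G'$ by attaching to $G$ three gadgets: (i) two vertex-disjoint cliques $A$ and $B$, each of size $s$, where $s$ is chosen polynomially large in $n + m + k$ (say $s = 10(m + k + 1)$); (ii) joins making every vertex of $A$ adjacent to every vertex of $G$, and likewise for $B$ (but $A$ and $B$ remain non-adjacent); and (iii) a matching $H$ of $t = m + k$ independent edges, disjoint from the rest. I would set $k_1 = k_2 = \binom{s}{2} + sn/2 + (m - k + t)/2$. A $w$-expression for $G$ extends to a $(w + O(1))$-expression for $G'$ by: building $A$ and $B$ as cliques, each using one dedicated label plus one scratch label; building $G$ on labels disjoint from $\ell_A, \ell_B$; collapsing $G$'s labels to a single $\ell_G$ via $\rho$-operations; firing $\eta_{\ell_A, \ell_G}$ and $\eta_{\ell_B, \ell_G}$; and disjoint-unioning $H$ (which has clique-width $2$).

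For the forward direction, given a bisection $(V_1, V_2)$ with cut at most $k$, I would color $V_1 \cup A$ red and $V_2 \cup B$ blue, and partition the $t$ matching edges into $j$ red and $t - j$ blue to absorb any asymmetry between $e(V_1)$ and $e(V_2)$; a direct accounting verifies $r_f \ge k_1$ and $b_f \ge k_2$. The backward direction is the main technical step: let $q_A, q_B$ count the red vertices in $A, B$ and let $r$ count the red vertices in $G$. I would first show $q_A \in \{0, s\}$ and $q_B \in \{0, s\}$: if, say, $q_A = s$ but $q_B \in \{1, \ldots, s - 1\}$, then the red-edge inequality forces $r \gtrsim n/2$, while the blue-edge inequality (now saddled with the loss $\binom{s}{2} - \binom{s - q_B}{2} \ge s - 1$) forces $r \lesssim n/2 - 1$, a contradiction once $s$ is polynomially large; symmetric cases rule out the remaining non-extreme configurations. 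Assuming WLOG $A$ all-red and $B$ all-blue, the two LCP inequalities reduce to $sr + R_G + j \ge sn/2 + T'$ and $s(n-r) + B_G + (t - j) \ge sn/2 + T'$ with $T' = (m - k + t)/2$; for $s$ sufficiently large these force $|r - n/2| < 1$, hence $r = n/2$, and summing yields $R_G + B_G \ge m - k$, i.e., cut at most $k$.

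Since $|V(G')| = n + 2s + 2t = \mathrm{poly}(n)$ and $cw(G') \le cw(G) + O(1)$, a hypothetical algorithm for \lcp{} running in time $f(w) N^{o(w)}$ would, via this reduction, solve \textsc{Minimum Bisection} in time $g(cw(G)) \cdot n^{o(cw(G))}$, contradicting the Fomin et al.\ lower bound and hence ETH. The main obstacle I foresee is the case analysis in the backward direction, where one must carefully rule out all non-extreme configurations of $q_A$ and $q_B$; the polynomial choice of $s$ ensures that the within-clique contributions dominate the join and matching terms, driving every such deviation to infeasibility.
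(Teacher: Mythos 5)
Your reduction is from the same source problem (\textsc{Minimum Bisection}) as the paper's, and both arguments are correct in outline, but the constructions are genuinely different. The paper builds $H$ from \emph{two mirrored copies} of $G$ (vertices $a_v$ and $b_v$) with a complete join between them and sets $k_1=k_2=m-k+n^2/4$; the $n\times n$ complete join is the forcing gadget, and a short counting identity, $(n-p)(n-q)+pq=n^2/2$ forcing $p=q=n/2$, immediately pins down $|A_r|=|A_b|=|B_r|=|B_b|=n/2$ in the reverse direction. Because the two copies are colored in mirror image, the red and blue sides are automatically symmetric and no padding is needed. Your construction keeps a single copy of $G$ and instead uses two large cliques joined to all of $G$ as the forcing gadget, plus a matching of $m+k$ edges to absorb the asymmetry between $e(V_1)$ and $e(V_2)$ that the paper's mirroring makes disappear. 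This buys a conceptually standard ``heavy anchor'' argument but costs you the case analysis over $(q_A,q_B)$, and one sub-step of that analysis as you sketched it is not quite right: when $q_A=s$ and $q_B$ is close to $s$, the term $\binom{q_B}{2}$ makes the red-edge inequality vacuous (it does not force $r\gtrsim n/2$); in that regime you must instead observe that the blue-edge total is at most $\binom{s-q_B}{2}+(s-q_B)n+2m+k$, which falls far below $k_2$ once $s=10(m+k+1)$, while for small nonzero $q_B$ a bichromatic-edge count does the job. With that repair (and the easy exclusion of $q_A=q_B\in\{0,s\}$), your backward direction goes through; the clique-width bound $cw(G')\le cw(G)+O(1)$ and the polynomial size blow-up are fine, so the ETH transfer is valid. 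The paper's route is shorter because its gadget is self-symmetrizing; yours is more modular but needs the extra bookkeeping.
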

\begin{proof}
	We give a reduction from the \textsc{Minimum Bisection} problem
	to the \lcp{} problem. Let $(G,k)$ be an instance of \textsc{Minimum Bisection}. We construct a graph $H$ as follows. 
	
	\begin{enumerate}
		\item For every vertex $v \in V(G)$, we introduce two vertices $a_v, b_v$ in $H$. Let $A=\{a_{v_1}, \ldots, a_{v_n}\}$ and $B=\{b_{v_1}, \ldots, b_{v_n}\}$
		\item For every edge $uv \in E(G)$, we add the edges $a_ua_v$ and $b_ub_v$ to $H$.
		\item Finally, for every vertex $a_{v_i} \in A$ and every vertex $b_{v_j} \in B$ add the edge $a_{v_i}b_{v_j}$ to $H$.  
	\end{enumerate}
	It is easy to see that the graph $H$ has $2n$ vertices and $2m+n^2$ edges and the construction of $H$ can be done in polynomial time. Moreover, if $cw(G)=w$ with $w$-expression $\Phi_G$, then we can construct a $(w+1)$-expression $\Phi_H$ of $H$ by taking two disjoint copies of $\Phi_G$ and relabel every vertex in the first copy with the label $w+1$ and every vertex in the second copy with some arbitrary label $\ell \in [w]$ and finally add edges between both the copies using $\eta_{w+1,\ell}$. This shows that $cw(H) \leq w+1$.
	Let us set
	$k_1=k_2=m-k+n^2/4$. We now show that $(G,k)$ is a \textsc{Yes} instance of \textsc{Minimum Bisection} if and only if $(H,k_1,k_2)$ is a
	\textsc{Yes} instance of \lcp{}.
	
	\paragraph{Forward direction.} Let $(V_1,V_2)$ be a partition of $V(G)$ such that $|E_G[V_1,V_2]| \leq k$ and $|V_1|=|V_2|$. We construct a $2$-coloring $f:V(H) \rightarrow \{r,b\}$ of $H$ as follows. For each $v \in V(G)$, $f(a_v)=r$ and $f(b_v)=b$ if $v \in V_1$ and $f(a_v)=b$ and $f(b_v)=r$ if $v \in V_2$.
	Let $A_r=\{a_v: f(a_v)=r\}, A_b=\{a_v: f(a_v)=b\}$ and $B_r=\{b_v: f(b_v)=r\}, B_b=\{b_v: f(b_v)=b\}$. It is easy to see that $|A_r|=|A_b|=|B_r|=|B_b|=n/2$.
	$$r_f=|E_H[A_r]|+|E_H[B_r]|+|E_H[A_r,B_r]|=m-k+n^2/4 = k_1$$
	
	Similarly, we can show that $b_f = k_2$. Therefore $(H,k_1,k_2)$ is a \textsc{Yes} instance of \lcp{}.
	
	\paragraph{Reverse direction.} Let $f:V(H) \rightarrow \{r,b\}$ be a $2$-coloring of $H$ such that $r_f = k_1$ and $b_f = k_2$.
	Let $A_r=\{a_v: f(a_v)=r\}, A_b=\{a_v: f(a_v)=b\}$ and $B_r=\{b_v: f(b_v)=r\}, B_b=\{b_v: f(b_v)=b\}$.
	Let $V_r=A_r \cup B_r$
	and $V_b=A_b \cup B_b$. Then we have $|E_H[V_r,V_b]|=2m+n^2-r_f-b_f=2k+n^2/2$.
	
	Let $V_1:=\{v : f(a_v)=r\}$ and $V_2:=\{v: f(a_v)=b\}$.
	We show that $|E_G(V_1,V_2)|\leq k$  and $|V_1|=|V_2|$.
	Let $|A_b|=p$ and $|B_b|=q$ such that $p+q=|V_b|$. Then we have $|A_r|=n-p$ and $|B_r|=n-q$.
	
	We know that $$|E_H[V_r]|+|E_H[V_b]|=k_1+k_2=2m-2k+n^2/2$$
	$$|E_H[A_r]|+|E_H[B_r]|+|E_H[A_r,B_r]|+|E_H[A_b]|+|E_H[B_b]|+|E_H[A_b,B_b]|=2m-2k+n^2/2$$
	$$|E_H[A_r]|+|E_H[B_r]|+(n-p)(n-q)+|E_H[A_b]|+|E_H[B_b]|+pq=2m-2k+n^2/2$$

	After simplifying we get $(n-p)(n-q)+pq=n^2/2$.
	Which implies $p=q=n/2$, that is $|A_r|=|A_b|=|B_r|=|B_b|=n/2$.
	Hence  $|V_1|=|V_2|=n/2$, that is ($V_1,V_2)$ is a bisection of $G$.

	Finally, we show that $|E_G[V_1,V_2]| \leq k$. 
	We know that $|E_H[V_r,V_b]|=2k+n^2/2$. $$|E_H[V_r,V_b]|=|E_H[A_r,A_b]|+|E_H[B_r,B_b]|+|E_H[A_r,B_b]|+|E_H[A_b,B_r]|=2k+n^2/2$$
	By the construction of $H$ we have $|E_H[A_r,A_b]|=|E_H[B_r,B_b]|$. Therefore we get
	$$2|E_H[A_r,A_b]|+n^2/4+n^2/4=2k+n^2/2$$
	Which implies $|E_H[A_r,A_b]|=k$ and hence $|E_G[V_1,V_2]|=k$.
	Therefore $(G,k)$ is a \textsc{Yes} instance of \textsc{Minimum Bisection}. This concludes the proof.
\end{proof}

\section{Parameterized Algorithms}\label{sec-palgo}

Clique-width of cluster graphs, co-cluster graphs and threshold graphs is at most two. Hence, from the Theorem~\ref{thm-cw-xp}, \lcp{} is  polynomial time solvable on these graph classes. In this section, we show that \lcp{} is \fpt parameterized by distance to cluster graphs, distance to co-cluster graphs and distance to threshold graphs.

\subsection{Distance to Cluster Graphs}
\begin{theorem}\label{thm-dist-cluster-1}
	\lcp{} is fixed-parameter tractable parameterized by the distance to cluster graphs.
\end{theorem}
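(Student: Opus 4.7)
My plan is a branch-and-solve FPT algorithm. First, I would invoke the known FPT algorithm for cluster vertex deletion (referenced in the preliminaries) to compute a modulator $S \subseteq V(G)$ with $|S| = d$ such that $H = G - S$ is a disjoint union of cliques $C_1, \ldots, C_t$. Then I would branch over all $2^d$ possible $2$-colorings $f_S: S \to \{r,b\}$ of $S$. Each branch fixes $S_r = f_S^{-1}(r)$, $S_b = f_S^{-1}(b)$, contributes a constant $\alpha_S = |E_G[S_r]|$ red edges and $\beta_S = |E_G[S_b]|$ blue edges inside $S$, and, for each $v \in V(H)$, fixes the values $r_v = |N(v) \cap S_r|$ and $b_v = |N(v) \cap S_b|$ which are the cross red/blue contributions of $v$ whenever it is colored red/blue respectively.

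After the branch, the total red edge count decomposes as $\alpha_S + \sum_C \bigl( \binom{c_r(C)}{2} + \sum_{v \in C,\, f(v) = r} r_v \bigr)$, and symmetrically for blue, where $c_r(C), c_b(C)$ count red, blue vertices in clique $C$. Since different cliques contribute independently, I would precompute for each clique $C$ the set $\mathcal{P}_C \subseteq \mathbb{Z}_{\geq 0}^2$ of achievable pairs $(\alpha, \beta)$ of its red and blue contributions. Vertices of $C$ partition into at most $2^d$ types according to $N(v) \cap S$, and vertices of the same type are interchangeable; a straightforward DP along the vertices of $C$ that tracks the count of red vertices together with the accumulated cross-red and cross-blue sums enumerates $\mathcal{P}_C$ in polynomial time. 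Crucially, since $r_v, b_v \leq d$, every pair in $\mathcal{P}_C$ lies in a polynomially bounded integer rectangle, so $|\mathcal{P}_C|$ is polynomial in $n$.

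Finally, I would combine the cliques with a two-dimensional subset-sum style DP: $D[i][\alpha^*][\beta^*]$ is true iff entries from $\mathcal{P}_{C_1}, \ldots, \mathcal{P}_{C_i}$ can be selected summing to $(\alpha^*, \beta^*)$, with coordinates capped at $k_1$ and $k_2$ respectively. The branch is a yes-instance iff some reachable state satisfies $\alpha^* + \alpha_S \geq k_1$ and $\beta^* + \beta_S \geq k_2$. Summed over the $2^d$ branches, the total running time is $2^d \cdot n^{O(1)}$, proving fixed-parameter tractability. The main subtlety I anticipate is controlling the size of each $\mathcal{P}_C$ and arguing that the per-clique DP captures every achievable contribution pair; this is precisely where the bound $r_v, b_v \leq |S| = d$ is essential, ensuring the cross-sums stay polynomially bounded while the quadratic within-clique term $\binom{c_r(C)}{2}$ is a deterministic function of the tracked red count and can be added at the end.
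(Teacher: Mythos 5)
Your proposal is correct and follows essentially the same strategy as the paper: branch over the $2^{d}$ colorings of the modulator, exploit the fact that distinct cliques contribute independently, and combine the per-clique contributions with a polynomially bounded dynamic program. The only cosmetic difference is that you enumerate achievable (red, blue) contribution pairs per clique and run a two-dimensional subset-sum over them, whereas the paper maximizes the number of blue edges subject to a lower bound on red edges; both give the same $2^{d} n^{O(1)}$ running time.
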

\begin{proof}
	
	%
	Let $(G, X, k_1,k_2)$ be a \lcp{} instance, where $X \subseteq V(G)$ of size $k$ such that $G-X$ is a disjoint union of cliques $C_1,C_2,\ldots, C_{\ell}$.
	We first guess the colors of vertices in $X$ in an optimal $2$-coloring of $G$. This can be done in $O(2^{k})$ time. Let $h: X \rightarrow \{r,b\}$  be such a coloring. Without loss of generality we assume that $X$ is an independent set in $G$. Otherwise, let $e_r$ and $e_b$ be the number of red and blue edges in the coloring $h$ having both end vertices in the set $X$.
	We build the new instance $(G',X,k_1',k_2')$ of \lcp{}, where $G'$ is the graph obtained from $G$ by deleting the edges having both their end vertices in $X$ and $k_1'=k_1-e_r$ and $k_2'=k_2-e_b$. It is easy to see that $(G,X,k_1,k_2)$ is a \textsc{Yes} instance of \lcp{} iff there exists a $2$-coloring $g$ of $G'$ such that $g|_X=h$ and $r_g \geq k'_1$ and $b_g \geq k'_2$. Hence, we  assume that $X$ is an independent set in $G$.
	
	 For each $i \in [\ell]$, let $G_i=G[X \cup C_i]$ be the subgraph of $G$ induced by the vertices of the clique $C_i$ and the set $X$.

	The algorithm is based on dynamic programming technique, which has two phases. 
	In phase-1, given a graph $G_i$ and non-negative integers $q$ and $n^r_i$, we find a $2$-coloring of $G_i$ that maximizes the number of blue edges with the constriant that there are at least $q$ red edges in $G_i$ and $n^r_i$ red vertices in $C_i$. In phase-2, for each $t \in [\ell]$, and a non-negative integer $p$, we find a $2$-coloring of $\widehat{G_t}=G[C_1\cup,\ldots,\cup C_t\cup X]$ that maximizes the number of blue edges with the constraint that number of red edges is at least $p$.

	\paragraph{\bf Phase-1.} 	For each $i \in [\ell]$, $q \in [|E(G_i)|]\cup\{0\}$ and $n^r_i \in [|C_i|]\cup\{0\}$, let $b[G_i,n^r_i,q]$ be the maximum number of blue edges that can be attained in a $2$-coloring $g$ of $G_i$ satisfying the following constraints. 
	\begin{enumerate}
		\item $g|_{X}=h$.
		\item number of red edges in $G_i$ is at least $q$.
		\item number of red vertices in $C_i$ is equal to $n^r_i$. 
	\end{enumerate}
	 If the constraint cannot be satisfied, then we let $b[G_i,n^r_i,q]=-\infty$. 
	From the definition of $b[,]$,  we can see that $b[G_i,0,0]$ gives the number of blue edges in $G_i$ when all vertices of $C_i$ are colored blue. $b[G_i,0,q]=-\infty$ for $q >0$, $b[G_i,n^r_i,0]=-\infty$ for $n^r_i >1$.
	For a given values of $i$, $n^r_i$ and $q$, the computation of $b[G_i,n^r_i,q]$ is  described as follows.

	Let $H_i$ be the graph obtained from $G_i$ by deleting the edges inside the clique $C_i$. It is easy to see that $X$ is a vertex cover of the graph $H_i$. If $g$ colors $n^r_i$ vertices red in the clique $C_i$ then we get ${n^r_i \choose 2}$ red edges and ${n^b_i \choose 2}$
	blue edges inside $C_i$, where $n^r_i+n^b_i=|C_i|$. Hence we get the following relation. 
	
	$$b\Big[G_i,n^r_i,q\Big]=b\Big[H_i,n^r_i,q-{n^r_i \choose 2}\Big] +{n^b_i \choose 2}$$
	For $v \in H_i-X$, let $r^i(v)$ and $b^i(v)$ denote the number of red and blue vertices in $N(v) \cap X$ respectively. For a vertex $v \in H_i$, we use $H_i-\{v\}$ to denote the graph obtained from $H_i$ by deleting the vertex $v$ and its incident edges. Let $q'=q-{n^r_i \choose 2}$. Using this notation, we get the following recurrence. 
	
	$$b\Big[H_i,n^r_i,q'\Big]= \max \Big\{b\big[H_i-\{v\},n^r_i-1, \max\{q'-r^i(v), 0\}\big], b\big[H_i-\{v\}, n^r_i, q'\big]+b^i(v)\Big\}$$
	
	If $v$ is colored red, then we get $r^i(v)$ red edges between $v$ and the neighbors of $v$ in $X$. If $v$ is colored blue, then we get  $b^i(v)$ blue edges between $v$ and neighbors of $v$ in $X$.
	
	The size of the DP table is at most $O(mn^2)$ and each entry can be computed in $O(n)$ time. Hence the running time of Phase-1 is $O(n^3m)$.
	
	\paragraph{\bf Phase-2.} Let $\widehat{G_t}$ be the subgraph of $G$ induced by the cliques $C_1,\cdots,C_t$ and the set $X$.
	Let $OPT[t,p]$ be the maximum number of blue edges that can be attained in a $2$-coloring $f$ of $\widehat{G_t}$ with the constraint that number of red edges is at least $p$ and $f|_{X}=h$. If the constraint cannot be satisfied, then we let $OPT[t,p]=-\infty$.
	From the definition of $OPT$, we have $OPT[0,0]=0$ and $OPT[0,p]=-\infty$ for $p >0$. For $t>0$ we have:
	
	$$OPT[t,p] = \max\limits_{\begin{matrix} q=0, \ldots, |E(G_t)|\\ n^r_t=0, \ldots, |C_t| \end{matrix}}  \Big\{OPT\big[t-1, \max\{p-q, 0\}\big]+b[G_t,n^r_t,q]\Big\}$$

	If there are $q$ red edges in $G_t$ and $n^r_t$ red vertices in $C_t$ in the coloring $f$, then we get $b[G_t,n^r_t,q]$ blue edges in $G_t$.   We consider all possible values for  $q$ and $n^r_t$ and pick the values that maximizes the $OPT$.
	
	Observe that $(G,X,k_1,k_2)$ is a \textsc{Yes} instance if and only if $OPT[\ell,k_1] \geq k_2$.  There are $O(\ell k_1)$ subproblems, each of which can be solved in $O(n^2m)$ time. As $\ell\leq n, k_1 \leq m$, the  running time of phase-2 is $O(\ell k_1n^2m)=O(n^3m^2)$.
	The overall running time of the algorithm is $O(2^kn^3m^2)$, where $O(2^k)$ is the time required for guessing the coloring of $X$ in an optimal coloring of $G$.  
	
\end{proof}

\subsection{Distance to Co-cluster Graphs}
\begin{theorem}\label{thm-dist-cocluster-1}
	\lcp{} is fixed-parameter tractable parameterized by the distance to co-cluster graphs.
\end{theorem}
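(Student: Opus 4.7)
My plan mirrors the two-phase dynamic programming blueprint of Theorem~\ref{thm-dist-cluster-1}, adapted to the complete multipartite structure of co-cluster graphs. Given an instance $(G, X, k_1, k_2)$ with $|X| = k$ and $G - X$ a complete multipartite graph with parts $I_1, \ldots, I_\ell$ (each $I_i$ an independent set, with a complete bipartite join between distinct parts), I would first guess the coloring $h : X \to \{r, b\}$ in time $2^k$, subtract the red and blue edges induced inside $X$ from $k_1$ and $k_2$, and thereafter assume $X$ is independent, exactly as in the cluster case.

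The structural change from the cluster case is that parts are now coupled: if $R_t$ denotes the total number of red vertices in $S_t = I_1 \cup \cdots \cup I_t$, then appending $I_{t+1}$ with $r_{t+1}$ red vertices produces exactly $R_t \cdot r_{t+1}$ new red edges and $(|S_t| - R_t)(|I_{t+1}| - r_{t+1})$ new blue edges across the cut, irrespective of which specific vertices of $I_{t+1}$ are red. This forces the main DP to carry $R_t$ as an extra coordinate alongside the accumulator of red edges.

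In a Phase~1 precomputation, for each part $I_i$ and each pair $(r_i, q)$ I would compute $b[I_i, r_i, q]$, the maximum number of blue edges between $I_i$ and $X$ over colorings extending $h$ that place exactly $r_i$ red vertices in $I_i$ and produce at least $q$ red edges into $X$; since $I_i$ is independent, a knapsack-style branching on each $v \in I_i$ using the quantities $r^i(v)$ and $b^i(v)$ computes this in polynomial time, in direct parallel to the cluster case. In Phase~2, I would define $OPT[t, R, p]$ as the maximum number of blue edges in $G[X \cup S_t]$ over $h$-extensions that color exactly $R$ vertices of $S_t$ red and create at least $p$ red edges in $G[X \cup S_t]$. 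The recurrence then reads
\[
OPT[t{+}1,\, R{+}r_{t+1},\, p{+}q{+}R\,r_{t+1}] = \max\Big\{\, OPT[t, R, p] + b[I_{t+1}, r_{t+1}, q] + (|S_t|{-}R)(|I_{t+1}|{-}r_{t+1}) \,\Big\},
\]
maximised over all compatible $(R, p, r_{t+1}, q)$. The instance is a yes-instance iff $OPT[\ell, R, k_1] \geq k_2$ for some $R$, and the overall running time is $2^k \cdot n^{O(1)}$.

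The main obstacle I anticipate is precisely this coupling between parts: in the cluster setting each clique contributes independently once $h$ is fixed, but in the co-cluster setting the cross-part edges tie every pair of parts together, so the natural per-part decomposition only works after augmenting the DP state with the running red count $R_t$ and carefully threading the cross-part red/blue contributions $R_t \cdot r_{t+1}$ and $(|S_t|-R_t)(|I_{t+1}|-r_{t+1})$ into both the accumulator $p$ and the objective. Once this accounting is in place, correctness follows by a routine induction on $t$, and the running time bound is immediate from the polynomial sizes of the two tables together with the $2^k$ outer guess.
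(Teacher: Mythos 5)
Your proposal is correct and follows essentially the same route as the paper: guess the coloring of $X$, reduce to the case where $X$ is independent, run a per-part knapsack (Phase~1) against $X$, and then combine the parts with a dynamic program whose state is augmented by the running count of red vertices so that the cross-part red and blue edge contributions can be accounted for. The only cosmetic differences are that you write the Phase~2 recurrence in forward (push) form and index the red count by the prefix $S_t$ rather than by $\widehat{G_t}$, which changes nothing of substance.
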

	\begin{proof}
	Let $(G, X, k_1,k_2)$ be a \lcp{} instance, where $X \subseteq V(G)$ of size $k$ such that $G - X$  is a co-cluster graph. The graph $G-X$ is either an independent set or a connected graph. If $G-X$ is an independent set then $X$ is a vertex cover of $G$, hence we apply the algorithm of~\cite{gutin2014parameterized} to solve this instance in $\fpt$ time.
	
	Hence, we assume that $G-X$ is connected. As $G-X$ is a complement of a cluster graph we can partition the co-cluster $G-X$ into maximal independent sets $I_1, I_2, \ldots, I_{\ell}$.  Note that for any $i,j \in [\ell]$ and $i \neq j$, every vertex of $I_i$ is adjacent to every vertex of $I_j$. We guess the colors of vertices in $X$ in an optimal $2$-coloring of $G$. This can be done in $O(2^{k})$ time. Let $h: X \rightarrow \{r,b\}$  be such a coloring. 
	We assume that $X$ is an independent set in $G$, otherwise we count the number of red and blue edges inside $X$ and adjust the parameters $k_1$, $k_2$ and delete all the edges inside $X$. 
	
	For each $i \in [\ell]$, let $G_i=G[X \cup I_i]$ be the subgraph of $G$ induced by the vertices of the independent set $I_i$ and the set $X$.

	The algorithm is similar to the algorithm of distance to cluster described in Theorem~\ref{thm-dist-cluster-1}. It is based on dynamic programming technique, which has two phases. 
	In phase-1, given a graph $G_i$ and non-negative integers $q$ and $n^r_i$, we find a $2$-coloring of $G_i$ that maximizes the number of blue edges with the constriant that there are at least $q$ red edges in $G_i$ and $n^r_i$ red vertices in $I_i$. In phase-2, for each $t \in [\ell]$, and a non-negative integer $p$, we find a $2$-coloring of $\widehat{G_t}=G[I_1\cup,\ldots,\cup I_t\cup X]$ that maximizes the number of blue edges with the constraint that number of red edges is at least $p$.

	\paragraph{\bf Phase-1.} We can observe that $X$ is a vertex cover of $G_i$ of size $k$.	For each $i \in [\ell]$, $q \in [|E(G_i)|] \cup \{0\}$ and $n^r_i \in [|I_i|] \cup \{0\}$, let $b[G_i,n^r_i,q]$ be the maximum number of blue edges that can be attained in a $2$-coloring $g$ of $G_i$ satisfying the following constraints. 
	\begin{enumerate}
		\item $g|_{X}=h$.
		\item number of red edges in $G_i$ is at least $q$.
		\item number of red vertices in $I_i$ is equal to $n^r_i$. 
	\end{enumerate}
	If the constraint cannot be satisfied, then we let $b[G_i,n^r_i,q]=-\infty$. 
	From the definition of $b[,]$,  we can see that $b[G_i,0,0]$ gives the number of blue edges in $G_i$ when all vertices of $I_i$ are colored blue. $b[G_i,0,q]=-\infty$ for $q >0$, $b[G_i,n^r_i,0]=-\infty$ for $n^r_i >1$.
	For $v \in G_i-X$, let $r^i(v)$ and $b^i(v)$ denote the number of red and blue vertices in $N(v) \cap X$ respectively. 
	For a given values of $i$, $n^r_i$ and $q$, the computation of $b[G_i,n^r_i,q]$ is  described as follows.

	$$b\Big[G_i,n^r_i,q\Big]= \max \Big\{b\big[G_i-\{v\},n^r_i-1, \max\{q-r^i(v), 0\}\big], b\big[G_i-\{v\}, n^r_i, q\big]+b^i(v)\Big\}$$
	
	If $v$ is colored red, then we get $r^i(v)$ red edges between $v$ and the neighbors of $v$ in $X$. If $v$ is colored blue, then we get  $b^i(v)$ blue edges between $v$ and neighbors of $v$ in $X$.
	
	The size of the DP table is at most $O(mn^2)$ and each entry can be computed in $O(n)$ time. Hence the running time of Phase-1 is $O(n^3m)$.

	\paragraph{\bf Phase-2.} Let $\widehat{G_t}$ be the subgraph of $G$ induced by the independent sets $I_1,\cdots,I_t$ and the set $X$.
	Let $OPT[t,N^r_t,p]$ be the maximum number of blue edges that can be attained in a $2$-coloring $f$ of $\widehat{G_t}$ satisfying the following constrains.
	
	\begin{enumerate}
		\item $f|_{X}=h$.
		\item number of red edges in $\widehat{G_t}$ is at least $p$.
		\item number of red vertices in $\widehat{G_t}-X$ is $N^r_t$.
	\end{enumerate}
	If the constraint cannot be satisfied, then we let $OPT[t,N^r_t,p]=-\infty$.
	From the definition of $OPT$, we have $OPT[0,0,0]=0$ and $OPT[0,N^t_r,p]=-\infty$ for $p >0$,  $OPT[t,0,p]=-\infty$ for $p >0$, For $t>0$ we have:
	
	
	\begin{align}
		OPT[t,N^r_t,p] &= \hspace{-0.3cm} \max\limits_{\begin{matrix} q=0, \ldots, |E(G_t)|\\ n^r_t=0, \ldots, |I_t| \end{matrix}}\Bigg\{OPT\big[t-1, N^r_t-n^r_t, \max\{p-q-n^r_t(N^r_t-n^r_t), 0\}\big] \notag\\ 
		&+b[G_t,n^r_t,q]+(N^b_t-|V(G_t)|+n^r_t)(|V(G_t)|-n^r_t)\Bigg\} \notag \label{eq:energycart}
	\end{align}

	Where $N^b_t= |V(\widehat{G_t})|-N^r_t$.
	If there are $q$ red edges in $G_t$ and $n^r_t$ red vertices in $I_t$ in the coloring $f$, then we get $b[G_t,n^r_t,q]$ blue edges in $G_t$. Also, we get $n^r_t(N^r_t-n^r_t)$ red edges and $(N^t_b-|V(G_t)|+n^r_t)(|V(G_t)|-n^r_t)$ blue edges between the sets $I_t$ and $I_1\cup,\ldots,I_{t-1}$ respectively.   We consider all possible values for number of red edges in $G_t$ and red vertices in $I_t$ and pick the one that maximizes the $OPT$.
	
	Observe that $(G,X,k_1,k_2)$ is a \textsc{Yes} instance if and only if $OPT[\ell,N^{r}_{\ell},k_1] \geq k_2$ for some non-negative integer $N^{r}_{\ell}$.  There are $O(\ell n k_1)$ subproblems, each of which can be solved in $O(n^2m)$ time. As $\ell\leq n, k_1 \leq m$, the overall running time of this algorithm is $O(2^{k}\ell k_1n^3m)=O(2^kn^4m^2)$. 
	
\end{proof}

\subsection{Distance to Threshold Graphs}
\begin{theorem}\label{thm-dist-threshold}
	\lcp{} is fixed-parameter tractable parameterized by the distance to threshold graphs.
\end{theorem}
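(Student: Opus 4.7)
The plan is to combine the $O(2^k)$ guessing of colorings of $X$ used in Theorems~\ref{thm-dist-cluster-1} and~\ref{thm-dist-cocluster-1} with a dynamic program that processes $G - X$ in its threshold construction order. As in those proofs, I first guess a coloring $h : X \to \{r,b\}$, count the red and blue edges lying entirely inside $X$ and subtract them from $k_1$ and $k_2$, and thereafter work with the reduced instance in which $X$ is an independent set. Distance to threshold graphs can be computed in \fpt time by~\cite{cai1996fixed}, so the modulator $X$ is available.

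Since $G - X$ is a threshold graph, an ordering $v_1, v_2, \ldots, v_{n'}$ of $V(G)\setminus X$ can be obtained in polynomial time such that, for every $i \geq 2$, the vertex $v_i$ is either isolated in $G[\{v_1, \ldots, v_i\}]$ or universal in $G[\{v_1, \ldots, v_i\}]$; this ordering is produced by repeatedly stripping an isolated or universal vertex from the current graph. The crucial structural observation is that the set of edges of $G - X$ incident to $v_{i+1}$ and going back to $\{v_1,\ldots,v_i\}$ is either empty or the full set $\{v_{i+1}v_j : j \leq i\}$; consequently, the only information about the colors of $v_1, \ldots, v_i$ needed to count monochromatic edges contributed by $v_{i+1}$ inside $G - X$ is the number $n^r$ of those vertices colored red. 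Edges from $v_{i+1}$ to $X$ are handled using precomputed values $r(v_{i+1}) := |\{x \in N(v_{i+1}) \cap X : h(x) = r\}|$ and $b(v_{i+1}) := |\{x \in N(v_{i+1}) \cap X : h(x) = b\}|$.

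Define $OPT[i, n^r, \rho]$ as the maximum number of blue edges obtainable in a $2$-coloring of $G[X \cup \{v_1, \ldots, v_i\}]$ that extends $h$, uses exactly $n^r$ red vertices in $\{v_1, \ldots, v_i\}$, and yields at least $\rho$ red edges (set to $-\infty$ if infeasible), with base $OPT[0,0,0] = 0$. The transition to $v_{i+1}$ has four cases according to the type of addition of $v_{i+1}$ in $G-X$ and the color chosen for it: an isolated-red step adds $r(v_{i+1})$ red edges and no blue edges; an isolated-blue step adds $b(v_{i+1})$ blue edges; a universal-red step additionally adds $n^r + r(v_{i+1})$ red edges; and a universal-blue step additionally adds $(i - n^r) + b(v_{i+1})$ blue edges. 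The current guess $h$ yields a \textsc{Yes}-certificate iff there exist values of $n^r$ and $\rho \geq k_1$ with $OPT[n', n^r, \rho] \geq k_2$; correctness follows by an easy induction on $i$ using the elimination property.

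The table has $O(n^2 m)$ entries computed in $O(1)$ time each, so a single guess costs $O(n^2 m)$ and the total running time is $O(2^k n^2 m)$, which is \fpt in $k$. The only place where the threshold hypothesis is genuinely used is the step arguing that the identities of the red vertices in $\{v_1, \ldots, v_i\}$ are irrelevant --- only their count matters --- which is what keeps the DP state polynomial in $n$; formalising this via the isolated/universal elimination ordering is the main obstacle, and once it is in place the remainder of the proof is bookkeeping parallel to Theorems~\ref{thm-dist-cluster-1} and~\ref{thm-dist-cocluster-1}.
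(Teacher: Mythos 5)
Your proposal is correct and follows essentially the same route as the paper: guess the coloring of $X$, reduce to the case where $X$ is independent, and run a dynamic program over the threshold construction ordering of $G-X$ with states indexed by the prefix length, the number of red vertices in the prefix, and the required number of red edges, with the same four isolated/universal $\times$ red/blue transitions. The only differences are cosmetic (your per-entry cost bookkeeping gives $O(2^k n^2 m)$ versus the paper's $O(2^k n^3 m)$).
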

\begin{proof}
	Let $(G, X, k_1,k_2)$ be a \lcp{} instance, where $X \subseteq V(G)$ of size $k$ such that $G-X$ is a threshold graph.  We guess the coloring of $X$ in an optimal $2$-coloring  of $G$ in $O(2^k)$ time. Let $h:X \rightarrow \{r,b\}$ be such a coloring. We assume that $X$ is an independent set in $G$, if not we count the number of red edges $e_r$ and blue edges $e_b$ inside $X$ and replace the parameters $k_1$ and $k_2$ with $k_1-e_r$ and $k_2-e_b$ respectively. Then finally we delete all the edges inside $X$.   For a vertex $v$ in $G-X$, we use $n^r_X(v)$ (resp. $n^b_X(v)$) to denote the number of neighbors of the vertex $v$ in the set $X$ which are colored red (resp. blue).  Let $v_1,v_2,\ldots,v_{\ell}$ be the ordering of the vertices of the threshold graph $G-X$  obtained from its construction (i.e., $v_i$ is added before $v_{i+1}$ to the graph $G-X$).

	For $t \in [\ell]$, let $V_t=\{v_1,\ldots,v_t\}$ and  $G_t$ be the graph induced by the vertices $V_t \cup X$. Using the definition of threshold graphs, we can see that, for each $t \in[\ell]$, the vertex $v_t$ is either a universal vertex or an isolated vertex in the graph $G_t-X$. We use $n^r_t$ and $n^b_t$ denote the number of red and blue vertices in a $2$-coloring of $G_t$ respectively. 
	
	Let $OPT[t,n^r_t,p]$ be the maximum number of blue edges that can be obtained in a $2$-coloring $g$ of $G_t$ with the constraint that $g|_X=h$ and $V_t$ has  $n^r_t$ vertices of color red, $G_t$ has at least $p$ red edges. If the constraint cannot be satisfied, then we let $OPT[t,n^r_t,p]=-\infty$.
	From the definition of $OPT$, we have $OPT[0,0,0]=0$, $OPT[0,n^r_t,p]=-\infty$ for $p >0$ and $OPT[t,0,p]=-\infty$ for $p>0$.

%
	
	For $t>0$, we have two cases based on whether $v_t$ is a universal vertex or an isolated vertex in the graph $G_t-X$. If $v_t$ is a universal vertex, then it is adjacent to all vertices of $G_t-X$. Hence we get the following relation. 
	
	$OPT[t,n^r_t,p] = $
	$$\max \big\{OPT[t-1, n^r_t-1, \max\{p-(n^r_t-1)-n^r_X(v_t),0\}], OPT[t-1, n^r_t, p]+ (n^b_t-1+n^b_X(v_t))\} $$
%

	If $v_t$ is colored red, then we get $(n^r_t-1)$ red edges between $v_t$ and neighbors of $v_t$ in $G_t-X$ and  $n^r_X(v_t)$ red edges between $v_t$ and its neighbors in $X$. 	If $v_t$ is colored blue, then we get $n^b_t-1$ blue edges between $v_t$ and neighbors of $v_t$ in $G_t-X$ and  $n^b_X(v_t)$ blue edges between $v_t$ and its neighbors in $X$. 
	
	If $v_t$ is an isolated vertex, then it is not adjacent to any vertex of  $G_t-X$. Then we get the following relation.
		$$OPT[t,n^r_t,p] = 
	\max \big\{OPT[t-1, n^r_t-1, \max\{p-n^r_X(v_t),0\}], OPT[t-1, n^r_t, p]+ n^b_X(v_t)\} $$
	If $v_t$ is colored red, then we  get $n^r_X(v_t)$ red edges between $v_t$ and its neighbors in $X$. 	If $v_t$ is colored blue, then we  get   $n^b_X(v_t)$ blue edges between $v_t$ and its neighbors in $X$.
	
	Observe that $(G,X,k_1,k_2)$ is a \textsc{Yes} instance of \lcp{} if and only if $OPT[\ell,n^r_t, k_1] \geq k_2$ for some integer $n^r_t$.  There are $O(n^2 k_1)$ subproblems, each of which can be solved in $O(n)$ time. As $k_1 \leq m$, the overall running time of this algorithm is $O(2^kn^3m)$.  
	
\end{proof}	

\section{Special Graph Classes}~\label{sec-bipartite}
In this section, we show that \lcp{} is $\NP$-complete on bipartite graphs and split graphs. 

\begin{theorem}\label{thm-bipartite}
	\textsc{ $2$-Load coloring} is $\NP$-complete on bipartite graphs.
\end{theorem}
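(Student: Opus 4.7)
The problem is in \NP: a $2$-coloring $f\colon V(H)\to\{r,b\}$ is a polynomial-size certificate, and the counts $r_f, b_f$ can be verified in $O(|E(H)|)$ time. So the work is to establish \NP-hardness, for which I plan a polynomial-time reduction from $3$-SAT (an alternative source problem would be Balanced Biclique).

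Given a $3$-SAT instance $\phi$ with variables $x_1,\ldots,x_n$ and clauses $C_1,\ldots,C_m$, I will build a bipartite graph $H=(U\cup V, E_H)$ from three kinds of gadgets. For each $x_i$, a \emph{variable gadget} consisting of two ``literal'' vertices $t_i, f_i\in U$ together with $T$ fresh common neighbours in $V$ forming a $K_{2,T}$, where $T$ is a polynomial in $n,m$ chosen much larger than $m$. For each $C_j$, a \emph{clause gadget} consisting of a vertex $c_j\in V$ adjacent to the three literal vertices appearing in $C_j$ (using $t_i$ for $x_i$ and $f_i$ for $\bar x_i$). Finally, a large padding biclique on fresh disjoint vertices, whose size is used to calibrate the two thresholds $k_1, k_2$. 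The parameters will be picked so that the intended coloring arising from a satisfying assignment $\alpha$ of $\phi$---colour $t_i$ red iff $x_i$ is true in $\alpha$ and $f_i$ the opposite colour, colour each $c_j$ to match one of its satisfied literals, and colour the padding to supply the rest of the red/blue balance---reaches exactly $(k_1,k_2)$, giving the forward direction.

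For the reverse direction, I will leverage the large value of $T$ to force, in any coloring meeting $(k_1,k_2)$, that $t_i$ and $f_i$ receive opposite colours in every variable gadget: if $t_i=f_i$, then the corresponding $K_{2,T}$ is pushed into a configuration that overshoots one threshold by $\Omega(T)$ while undershooting the other by $\Omega(T)$, more than the padding and clause gadgets could compensate for. This extracts a truth assignment $\alpha$, and the clause-gadget adjacencies in turn compel $\alpha$ to satisfy every clause $C_j$.

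The principal obstacle will be calibrating $T$, the padding size, and the thresholds $k_1, k_2$ tightly enough to rule out ``cheating'' colorings that exploit the bipartite structure. In a bipartite graph each monochromatic edge needs one vertex of its colour on each side of the bipartition, which rules out many general-graph gadget constructions; moreover, in the basic $K_{2,T}$ gadget the ``bad'' coloring $t_i=f_i$ can actually contribute \emph{more} edges of one colour than the ``good'' $t_i\neq f_i$ coloring, which makes the forcing argument subtle. I expect the fix to be using two symmetric $K_{2,T}$-gadgets per variable (one biased toward each colour) so that neither $t_i=f_i=r$ nor $t_i=f_i=b$ can simultaneously meet both thresholds; once the gadgets are balanced in this way, the correctness proof is a routine case analysis on how many variable gadgets deviate from the intended coloring.
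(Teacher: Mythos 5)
Your reduction source and gadget strategy differ fundamentally from the paper's, which reduces from \lcp{} on \emph{general} graphs (already known to be \NPH) by subdividing every edge of $G$ and attaching $m^2$ pendant vertices to each subdivision vertex, so that each original edge becomes a ``heavy'' unit whose colour contribution in $H$ is essentially dictated by its two endpoints; the thresholds are then read off by direct counting. That choice of source problem is not cosmetic: \lcp{} imposes only two global counting constraints, and a reduction from 3-SAT must encode $m$ local clause constraints using nothing but these two global counts. This is where your plan has a genuine gap.

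The gap is in the reverse direction. Your forcing argument for the variable gadgets is local (``if $t_i=f_i$ then this gadget overshoots one threshold by $\Omega(T)$ and undershoots the other by $\Omega(T)$, more than the padding and clause gadgets could compensate for''), but it ignores compensation among the variable gadgets themselves. Each gadget's contribution is just a point in $\mathbb{Z}^2$ (red count, blue count); if gadget $i$ deviates by $(+\Omega(T),-\Omega(T))$ and gadget $j$ deviates by $(-\Omega(T),+\Omega(T))$, the totals are unchanged and neither threshold detects anything. Worse, even in the honest coloring a $K_{2,T}$ gadget with $t_i\neq f_i$ contributes $(a,\,T-a)$ for a freely chosen $a$, i.e., it has $\Omega(T)$ of internal slack in trading red edges for blue ones, while each clause gadget contributes at most $3$ edges in total. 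So whatever deficit an unsatisfying assignment causes at the clause gadgets can be absorbed by re-tuning the $a$'s, unless $k_1+k_2$ is pinned to the exact maximum of $r_f+b_f$ over all colorings --- but in a bipartite graph that maximum is $|E(H)|$, attained by the all-red coloring, so pinning the sum is not available either. The proposed fix of two symmetric gadgets per variable does not touch this obstruction, since the achievable contribution sets still live in the same two-dimensional space where deviations cancel. I would abandon the 3-SAT route and instead reduce, as the paper does, from a problem that is itself a global-counting problem (\lcp{} on general graphs, or \textsc{Minimum Bisection}), where the only thing the bipartite construction has to preserve is the pair $(r_f,b_f)$ up to an affine reparametrization of the thresholds.
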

\begin{proof}
	We give a reduction from \lcp{} on general graphs. 
	Let $(G,k_1,k_2)$ be an instance of \lcp{}. Without loss of generality, we assume that $k_2 \geq k_1$.  We construct a bipartite graph $H = ((X\cup Z)\cup Y),E)$ as follows. For every vertex $v \in V(G)$, we introduce a vertex $x_v \in X$. For every edge $e \in E(G)$, we introduce a vertex $y_e \in Y$, and if $e =uv$, then $y_e$ is adjacent to $x_u$ and $x_v$ in $H$. For every edge $e \in E(G)$, we also introduce additional $m^2$ vertices $z^1_e,z^2_{e},\ldots,z^{m^2}_e$ in $Z$. For each $e \in E(G)$ and  $i \in [m^2]$ introduce an edge between $z^i_e$ and $y_e$ in $H$.  This completes the construction of the bipartite graph $H$ and clearly can be performed in polynomial time. We set $k_1^\prime = 2k_1+k_1m^2$, $k_2^\prime=(m+k_2-k_1)+(m-k_1)m^2$. We argue that $(G,k_1,k_2)$ is a \textsc{yes} instance of \lcp{} if and only if $(H,k_1^\prime,k_2^\prime)$ is a \textsc{yes} instance of \lcp{}. 
	
	
	\emph{Forward direction}. Let $f: V(G) \rightarrow \{r,b\}$ is a  $2$-coloring of $G$ such that $r_f=k_1$ and $b_f =k_2$.  Then we define a coloring $g: V(H) \rightarrow \{r,b\}$ of $H$ as follows:  $g(x_v) = f(v)$ for all $x_v \in X$. For every edge $e = uv \in E(G)$, if $f(u)=f(v)=r$, then $g(y_e) = f(u)$, else $g(y_e)=b$. For each $e \in E(G)$ and $i \in [m^2]$, $g(z^i_e)=g(y_e)$. 
	Note that for each red (resp blue) edge $e=uv$ in $G$, $m^2+2$ edges (namely $y_ex_v$, $y_ex_u$, $z^i_ex_u$) are red (resp. blue) with respect to $g$. For each edge $e=uv \in E(G)$ with $f(u) \neq f(v)$ we get $m^2+1$ blue edges with respect to $g$.  Therefore, $r_g=(m^2+2)k_1=2k_1+m^2k_1$ and $b_g=(m^2+2)k_2+(m-k_1-k_2)(m^2+1)=(m+k_2-k_1)+(m-k_1)m^2$. Hence $(H,k_1^\prime,k_2^\prime)$ is a \textsc{yes} instance of \lcp{}.
	
	\emph{Reverse direction.} Let  $g: V(H) \rightarrow \{r,b\}$ be a $2$-coloring of $H$ such that $r_g=2k_1+k_1m^2$ and $b_g=(m+k_2-k_1)+(m-k_1)m^2$. We assume that $g(z^i_e)=g(y_e)$ for each $e \in E(G)$ and $i \in [m^2]$, otherwise we recolor the vertices $z^i_e$ with the color of $y_e$, as it only increases the number of red and blue edges.  Now, we argue that the $2$-coloring $g$ of $H$ when restricted to the vertices of $G$ gives a $2$-coloring $f$ of $G$ such that $r_f=k_1$ and $b_f=k_2$.  
	
	We partition the vertices of $Y$ in $H$ into four sets based on their coloring in $g$ as follows.
	$$P_r=\{y_e: e=uv, ~g(y_e)=g(x_u)=g(x_v)=r\}$$
	$$Q_r=\{y_e: e=uv, ~g(y_e)=r\text{~ and~} g(x_u)\neq g(x_v)\}$$
	$$P_b=\{y_e: e=uv, ~g(y_e)=g(x_u)=g(x_v)=b\}$$
	$$Q_b=\{y_e: e=uv, ~g(y_e)=b\text{~and~} g(x_u)\neq g(x_v)\}$$
	
	Let $p_r$, $q_r$,$p_b$ and $q_b$ denote the sizes of the sets $P_r$, $Q_r$ $P_b$ and $Q_b$ respectively.  Using this notation, we have  $r_g=2p_r+q_r+m^2(p_r+q_r)$, $b_g=2p_b+q_b+m^2(p_b+q_b)$. Also, it is easy to see that $p_r+q_r+p_b+q_b=m$. By using above two equations, we get $$r_g+b_g=p_r+p_b+m+m^3$$
	$$p_r+p_b=r_g+b_g-m-m^3 = 2k_1+k_1m^2+(m+k_2-k_1)+(m-k_1)m^2-m-m^3=k_1+k_2$$

	\noindent\emph{Claim.} $p_r=k_1$ and $p_b=k_2$.
	
	Suppose $p_r=k_1 - \ell$  and $p_b=k_2+\ell$ for some $\ell \geq 1$. Then $$r_g=2p_r+q_r+m^2(p_r+q_r)=2(k_1-\ell)+q_r+m^2(k_1-\ell+q_r)$$ 
	Since $r_g=2k_1+k_1m^2$, by substituting in the above equation, we get $q_r=\Big(\frac{m^2+2}{m^2+1}\Big)\ell \geq \ell+1$. That implies $q_b \leq m-(k_1-\ell) -(k_2+\ell)-(\ell+1)=(m-k_1-k_2-\ell-1)$. Then
	\begin{equation*}
	\begin{aligned}
	b_g ={} &2p_b+q_b+m^2(p_b+q_b) \\
	=& 2(k_2+\ell)+(m-k_1-k_2-\ell-1)+m^2(k_2+\ell+(m-k_1-k_2-\ell-1)) \\
	<& (m+k_2-k_1)+m^2(m-k_1)
	\end{aligned}
	\end{equation*}
	
	This is a contradiction as $b_g=(m+k_2-k_1)+m^2(m-k_1)$. Therefore $p_r=k_1$ and $p_b=k_2$. Define $f: V(G) \rightarrow \{r,b\}$ as $f(u)=g(x_u)$. Since $p_r=k_1$ and $p_b=k_2$, we get $r_f\geq k_1$ and $b_f \geq k_2$.  Hence, by restricting the coloring $g$ of $H$ to the vertices of $G$, we get  a coloring $f$ of $G$ with at least $k_1$ red edges and $k_2$ blues edges. Therefore $(G,k_1,k_2)$ is a \textsc{yes} instance of \lcp{}.
\end{proof}
\begin{theorem}\label{thm-split}
	\textsc{ $2$-Load coloring} is $\NP$-complete on split graphs.
\end{theorem}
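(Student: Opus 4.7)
The plan is to reduce \lcp{} on general graphs, which is \NP-complete by~\cite{ahuja2007minimum}, to \lcp{} on split graphs, adapting the gadget construction of Theorem~\ref{thm-bipartite} so that the edge-vertices form a clique rather than an independent set. Given an instance $(G,k_1,k_2)$ with $|V(G)|=n$ and $|E(G)|=m$, I will build a split graph $H=(C\cup I,E_H)$ as follows. For every $v\in V(G)$ put a vertex $x_v$ in the independent side $I$; for every $e=uv\in E(G)$ put an edge-vertex $y_e$ in the clique side $C$ and make $y_e$ adjacent to $x_u$ and $x_v$; for every such $e$ additionally introduce $m^{2}$ pendants $z^1_e,\ldots,z^{m^2}_e$ in $I$, each adjacent only to $y_e$; and finally turn $C=\{y_e:e\in E(G)\}$ into a clique. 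Then $H$ is a split graph constructed in polynomial time. Set the targets
$$k_1'=(m^{2}+2)k_1+{k_1\choose 2},\qquad k_2'=(m^{2}+2)k_2+(m^{2}+1)(m-k_1-k_2)+{m-k_1\choose 2}.$$

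For the forward direction, starting from $f:V(G)\to\{r,b\}$ with $r_f=k_1$ and $b_f=k_2$, I extend it to $g$ on $H$ by $g(x_v)=f(v)$, $g(y_e)=r$ if both endpoints of $e$ are red under $f$ and $g(y_e)=b$ otherwise, and $g(z^i_e)=g(y_e)$. A direct count---the one used in the forward direction of Theorem~\ref{thm-bipartite} together with the additional clique contributions ${k_1\choose 2}$ red and ${m-k_1\choose 2}$ blue edges among the $y_e$'s---shows $r_g=k_1'$ and $b_g=k_2'$.

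For the reverse direction, given any $g$ with $r_g\ge k_1'$ and $b_g\ge k_2'$, I first normalize by recolouring each $z^i_e$ to match $g(y_e)$ (which only increases both loads), and then partition $C$ into $P_r,Q_r,P_b,Q_b$ exactly as in Theorem~\ref{thm-bipartite}, according to whether $y_e$ is red or blue and whether both $x_u,x_v$ agree with $g(y_e)$. Writing $|C_r|=p_r+q_r$ and $|C_b|=p_b+q_b$, the loads decompose as
$$r_g={|C_r|\choose 2}+2p_r+q_r+m^{2}|C_r|,\qquad b_g={|C_b|\choose 2}+2p_b+q_b+m^{2}|C_b|.$$
Summing the two inequalities and using $|C_r|+|C_b|=m$ together with $r_g+b_g\ge k_1'+k_2'$ will pin the clique split to $|C_r|=k_1$, $|C_b|=m-k_1$, after which an inequality chase identical in spirit to the contradiction used in Theorem~\ref{thm-bipartite} forces $p_r=k_1$, $q_r=0$, $p_b=k_2$. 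The restriction of $g$ to $\{x_v:v\in V(G)\}$ then yields a coloring $f$ of $V(G)$ with $r_f\ge k_1$ and $b_f\ge k_2$.

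The main obstacle is the quadratic clique contribution ${|C_r|\choose 2}+{|C_b|\choose 2}$, absent from the bipartite case, which could in principle allow a cheating coloring to over-invest in clique edges while shorting the pendant/bipartite counts. The choice of exactly $m^{2}$ pendants per edge is made so that every unit of imbalance moved to the clique side costs a full $m^{2}$ monochromatic edges on the pendant side, strictly dominating the linear-in-$m$ gain available from the quadratic clique term; converting this intuition into a tight inequality that rules out all deviations $|C_r|\neq k_1$ will be the technical heart of the argument.
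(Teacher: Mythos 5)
Your construction is exactly the paper's: the paper proves Theorem~\ref{thm-split} by taking the bipartite gadget of Theorem~\ref{thm-bipartite}, turning $Y$ into a clique, and setting $k_1'=2k_1+m^2k_1+{k_1 \choose 2}$ and $k_2'=(m+k_2-k_1)+(m-k_1)m^2+{m-k_1 \choose 2}$, which are algebraically identical to your targets. Your proposal is correct and in fact spells out more of the reverse direction (pinning $|C_r|=k_1$ via the $m^2$ pendant penalty before forcing $q_r=0$, $p_r=k_1$, $p_b\geq k_2$) than the paper's one-line proof does.
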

\begin{proof}
	The proof is similar to the case of bipartite graphs except that we add all possible edges between the vertices of $Y$ to make it a clique in the graph $H$. We set $k_1'=2k_1+m^2k_1+{k_1 \choose 2}$  and $k_2'=(m+k_2-k_1)+(m-k_1)m^2+{m-k_1 \choose 2}$. 
\end{proof}
\section{Conclusion}
In this paper, we have studied the parameterized complexity of  \lcp{}. 
We showed that \lcp{} (a) cannot
be solved in time $f(w)n^{o(w)}$, unless ETH fails, (b) can be solved in time  $n^{O(w)}$, where $w$ is the clique-width of the graph.
 We have shown that the problem is \fpt parameterized by (a) distance to cluster graphs  (b) distance to co-cluster graphs and (c) distance to threshold graphs. We also studied the complexity of the problem on special
classes of graphs. We have shown that the problem is $\NP$-complete on bipartite graphs and split graphs. We conclude with the following open questions.

\begin{itemize}
\item As the problem admits a linear kernel with respect to the solution size, it is natural to study the kernelization complexity of the problem with respect to structural graph parameters.


\item Finally, we do not know if \lcp{} can be solved in polynomial time on interval graphs or permutation graphs.
\end{itemize}

\bibliographystyle{splncs03}
\bibliography{myrefs.bib}

\end{document}